\newtheorem{proposition}{Proposition}
\theoremstyle{definition}
\newtheorem{definition}{Definition}
\newtheorem*{notation}{Notation}
\newcommand{\hole}{{[\phantom x]}}
\newcommand{\atom}[1]{{``#1"}}
\newcommand{\fv}{{\operatorname{FV}}}
\newcommand{\rb}{{\operatorname{RB}}}
\newcommand{\lamrb}{{\Lambda^\rb}}
\newcommand{\bnf}{{\operatorname{NF}}}
\title{Lambda Calculus with Explicit Read-back}
\author{Anton Salikhmetov}
\begin{document}
\maketitle

\begin{abstract}
This paper introduces a new term rewriting system that is similar to the embedded read-back mechanism for interaction nets presented in our previous work, but is easier to follow than in the original setting and thus to analyze its properties.
Namely, we verify that it correctly represents the lambda calculus.
Further, we show that there is exactly one reduction sequence that starts with any term in our term rewriting system.
Finally, we represent the leftmost strategy which is known to be normalizing.
\end{abstract}

\section{Introduction}

Read-back is the process of decoding a $\lambda$-term from its another representation.
Previously, \cite[Section 7]{termgraph} has presented an embedded read-back mechanism for interaction nets.
Here, we will define a novel term rewriting system which resembles that mechanism in order to check its correctness and to study its properties more easily than in the original setting.

\begin{notation}
We will use notations similar to \cite{barendregt}.
$C\hole$ is a context with one hole, which will be denoted with $\hole$.
$C[M]$ is the result of placing $M$ in the hole of the context $C\hole$.
$\Lambda$ is the set of all $\lambda$-terms.
$\bnf$ is the set of all $\lambda$-terms in $\beta$-normal form.
$\fv(M)$ is the set of all free variables in a $\lambda$-term $M$.
$M[x:=N]$ is the result of substituting $N$ for all free occurrences of variable $x$ in $M$.
$M\ \vec N \equiv M\ N_1\cdots N_n$, where $n \ge 0$.
If $M, N \in \Lambda$, then $M \rightarrow N$ stands for $\beta$-reduction, and we write $M \rightarrow_= N$ when $M \equiv N$ or $M \rightarrow N$.
\end{notation}

\begin{definition}
\label{bulletdef}
Let $M \in \Lambda$.
$\atom M$ is an \textit{atom}.
$M^\bullet$ is the result of replacing each $x \in \fv(M)$ in $M$ with $\atom x$.
Additionally, we define set $\Lambda^\bullet = \{M^\bullet\ |\ M \in \Lambda\}$, then extend the definitions of substitution and $\fv$ for $\Lambda^\bullet$ according to $\fv(\atom x) = \varnothing$ for any variable $x$.
\end{definition}

Note that for any $M, N \in \Lambda^\bullet$, we have $\fv(M) = \varnothing$ and $M[x:=N] \equiv M$.

\begin{definition}
$\lamrb$ is the minimal set that satisfies the following conditions:
\begin{align*}
\forall M \in \Lambda&:
\ \atom M \in \lamrb; \\
\forall C\hole,\ M \in \Lambda^\bullet&:
\ \langle C\hole,\ M \rangle \in \lamrb; \\
\forall C\hole,\ M \in \lamrb,\ \vec N \in \Lambda^\bullet&:
\ \langle C\hole,\ M\ \vec N \rangle \in \lamrb.
\end{align*}
\end{definition}

\begin{definition}
\label{rbred}
Let us introduce a reduction relation $\rightarrow$ on $\lamrb$ as follows:
\begin{align*}
\forall C\hole,\ M \in \Lambda&:
\ \langle C\hole,\ \atom M \rangle \rightarrow \atom{C[M]}; \\
\forall C\hole,\ \lambda x.M \in \Lambda^\bullet&:
\ \langle C\hole,\ \lambda x.M \rangle \rightarrow \langle C[\lambda x.\hole],\ M[x:=\atom x] \rangle; \\
\forall C\hole,\ M \in \Lambda,\ N_0, \vec N \in \Lambda^\bullet&:
\ \langle C\hole,\ \atom M\ N_0\ \vec N \rangle \rightarrow \langle C\hole,\ \langle M\ \hole,\ N_0 \rangle\ \vec N \rangle; \\
\forall C\hole,\ \lambda x.M, N_0, \vec N \in \Lambda^\bullet&:
\ \langle C\hole,\ (\lambda x.M)\ N_0\ \vec N \rangle \rightarrow \langle C\hole,\ M[x:=N_0]\ \vec N \rangle; \\
\forall M, M' \in \lamrb,\ C\hole,\ \vec N \in \Lambda^\bullet&:
\ M \rightarrow M'\ \Rightarrow\ \langle C\hole,\ M\ \vec N \rangle \rightarrow \langle C\hole,\ M'\ \vec N \rangle.
\end{align*}
\end{definition}

Thanks to $\Lambda \cap \lamrb = \varnothing$, the introduced relation is not to be confused with $\beta$-reduction.

\section{Correctness}

\begin{definition}
\label{readback}
Mapping $\rb: \Lambda^\bullet \cup \lamrb \rightarrow \Lambda$ is called \textit{read-back} and defined as follows:
\begin{align*}
\forall C\hole,\ M \in \Lambda^\bullet \cup \lamrb,\ \vec N \in \Lambda^\bullet&:
\ \rb(\langle C\hole,\ M\ \vec N\rangle) \equiv C[\rb(M)\ \rb(N_1) \cdots \rb(N_n)]; \\
\forall M, N \in \Lambda^\bullet&:
\ \rb(M\ N) \equiv \rb(M)\ \rb(N); \\
\forall \lambda x.M \in \Lambda^\bullet&:
\ \rb(\lambda x.M) \equiv \lambda x.\rb(M[x:=\atom x]); \\
\forall M \in \Lambda&:
\ \rb(\atom M) \equiv M.
\end{align*}
\end{definition}

For example, if $M$ is a $\lambda$-term, then $\rb(\langle \hole,\ M^\bullet\rangle) \equiv \rb(M^\bullet) \equiv M$.

\begin{proposition}
\label{betarb}
$\forall \lambda x.M, N \in \Lambda^\bullet:\ \rb((\lambda x.M)\ N) \rightarrow \rb(M[x:=N])$.
\end{proposition}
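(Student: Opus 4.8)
The plan is to unfold $\rb((\lambda x.M)\ N)$ with the read-back clauses until a $\beta$-redex appears, contract it once, and match what remains against $\rb(M[x:=N])$ by means of a substitution lemma. Since $\lambda x.M\in\Lambda^\bullet$ and $N\in\Lambda^\bullet$, the application $(\lambda x.M)\ N$ is again in $\Lambda^\bullet$, so $\rb$ is defined on it and the second clause of Definition~\ref{readback} gives $\rb((\lambda x.M)\ N)\equiv\rb(\lambda x.M)\ \rb(N)$; the third clause then gives $\rb(\lambda x.M)\equiv\lambda x.\rb(M[x:=\atom x])$. Hence
\[
\rb((\lambda x.M)\ N)\equiv(\lambda x.\rb(M[x:=\atom x]))\ \rb(N),
\]
which is a $\beta$-redex contracting in one step to $\rb(M[x:=\atom x])[x:=\rb(N)]$. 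So the Proposition follows once we establish the identity $\rb(M[x:=\atom x])[x:=\rb(N)]\equiv\rb(M[x:=N])$.

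I would prove this identity by induction on the number of internal (abstraction and application) nodes of $M$, and — this matters — I would keep it quantified over all $M$ with $\lambda x.M\in\Lambda^\bullet$, rather than over the superficially handier class of $M$ with $M[x:=\atom x]\in\Lambda^\bullet$: the latter would also admit $M\equiv\atom x$, for which the identity plainly fails, whereas ``$\lambda x.M\in\Lambda^\bullet$'' is precisely the condition forbidding the atom $\atom x$ inside $M$, and it is still closed under the subterm passages the induction needs. Under the variable convention (bound variables chosen different from $x$ and outside $\fv(N)\cup\fv(\rb(N))$, and using $\fv=\varnothing$ on atoms and on $\Lambda^\bullet$), the base cases are immediate: if $M\equiv x$ both sides equal $\rb(N)$; if $M$ is an atom $\atom y$ then necessarily $y\neq x$ (no $\atom x$ can occur in such an $M$) and both sides equal $y$. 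The application case is routine, since $\rb$ (second clause), ordinary substitution, and the induction hypothesis each distribute over application. In the abstraction case $M\equiv\lambda z.M'$, the third clause of Definition~\ref{readback} peels $\lambda z$ off both $\rb(M[x:=\atom x])$ and $\rb(M[x:=N])$ at the cost of an extra substitution $[z:=\atom z]$; one then commutes $[z:=\atom z]$ past the $x$-substitutions (harmless, as $z$ is fresh and $\atom x,\atom z$ are closed) and applies the induction hypothesis to $M'[z:=\atom z]$, which still satisfies $\lambda x.(M'[z:=\atom z])\in\Lambda^\bullet$ and has fewer internal nodes than $M$.

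The only real obstacle is this abstraction case: choosing the induction hypothesis so it survives passage under $\lambda z$ — hence recursing on $M'[z:=\atom z]$ rather than on $M'$, whose free $z$ would otherwise escape the intended class — and carrying out the small bundle of $\alpha$-renamings and substitution-commutation steps that let $[z:=\atom z]$ slide past $[x:=\atom x]$, $[x:=N]$, and $[x:=\rb(N)]$. Everything else is straightforward bookkeeping.
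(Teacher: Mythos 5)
Your proposal is correct and follows essentially the same route as the paper: unfold $\rb((\lambda x.M)\ N)$ to the $\beta$-redex $(\lambda x.\rb(M[x:=\atom x]))\ \rb(N)$, contract it once, and identify the result with $\rb(M[x:=N])$ via the identity $\rb(M[x:=\atom x])[x:=\rb(N)] \equiv \rb(M[x:=N])$. The only difference is that the paper simply asserts this identity (``Notice that\dots''), whereas you supply the inductive proof of it, including the correct observation that $\lambda x.M \in \Lambda^\bullet$ is what excludes the problematic case $M \equiv \atom x$.
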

\begin{proof}
Let $M' \equiv \rb(M[x:=\atom x])$.
Notice that $\rb(M[x:=N]) \equiv M'[x:=\rb(N)]$.
Then we have $\rb((\lambda x.M)\ N) \equiv (\lambda x.M')\ \rb(N) \rightarrow M'[x:=\rb(N)] \equiv \rb(M[x:=N])$.
\end{proof}

\begin{proposition}
\label{rbcorrect}
$\forall L, R \in \lamrb:\ L \rightarrow R\ \Rightarrow\ \rb(L) \rightarrow_= \rb(R)$.
\end{proposition}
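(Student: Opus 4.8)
The plan is to induct on the derivation of $L \rightarrow R$, performing a case analysis on which of the five clauses of Definition~\ref{rbred} produced the step; only the last (congruence) clause will call on the induction hypothesis. In every case the method is the same: expand $\rb(L)$ and $\rb(R)$ using Definition~\ref{readback} and compare the results.

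For the first three clauses I expect to obtain $\rb(L) \equiv \rb(R)$, so that $\rb(L) \rightarrow_= \rb(R)$ holds by reflexivity. For $\langle C\hole,\ \atom M \rangle \rightarrow \atom{C[M]}$, both sides read back to $C[M]$. For $\langle C\hole,\ \lambda x.M \rangle \rightarrow \langle C[\lambda x.\hole],\ M[x:=\atom x] \rangle$, the left side reads back to $C[\lambda x.\rb(M[x:=\atom x])]$ via the $\lambda$-clause of $\rb$, and the right side reads back to $(C[\lambda x.\hole])[\rb(M[x:=\atom x])]$, which is the same term once one fixes the context-filling convention. For $\langle C\hole,\ \atom M\ N_0\ \vec N \rangle \rightarrow \langle C\hole,\ \langle M\ \hole,\ N_0 \rangle\ \vec N \rangle$, both sides read back to $C[M\ \rb(N_0)\ \rb(N_1)\cdots\rb(N_n)]$, using $\rb(\langle M\ \hole,\ N_0 \rangle) \equiv M\ \rb(N_0)$ on the right. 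In each case one should also check that the term on the right genuinely lies in $\lamrb$, so that $\rb$ is defined on it, which is immediate from the formation rules of $\lamrb$.

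The clause with real content is the $\beta$-clause $\langle C\hole,\ (\lambda x.M)\ N_0\ \vec N \rangle \rightarrow \langle C\hole,\ M[x:=N_0]\ \vec N \rangle$. Here $\rb(L) \equiv C[(\lambda x.\rb(M[x:=\atom x]))\ \rb(N_0)\ \rb(N_1)\cdots\rb(N_n)]$ while $\rb(R) \equiv C[\rb(M[x:=N_0])\ \rb(N_1)\cdots\rb(N_n)]$. Proposition~\ref{betarb}, applied to $\lambda x.M$ and $N_0$, says precisely that the displayed redex contracts: $(\lambda x.\rb(M[x:=\atom x]))\ \rb(N_0) \rightarrow \rb(M[x:=N_0])$. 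Since $\beta$-reduction is compatible with contexts, contracting this redex inside the ambient context $C[\hole\ \rb(N_1)\cdots\rb(N_n)]$ yields $\rb(L) \rightarrow \rb(R)$, hence $\rb(L) \rightarrow_= \rb(R)$. Finally, for the congruence clause $M \rightarrow M' \Rightarrow \langle C\hole,\ M\ \vec N \rangle \rightarrow \langle C\hole,\ M'\ \vec N \rangle$ we have $\rb(L) \equiv C[\rb(M)\ \rb(N_1)\cdots\rb(N_n)]$ and $\rb(R) \equiv C[\rb(M')\ \rb(N_1)\cdots\rb(N_n)]$; the induction hypothesis gives $\rb(M) \rightarrow_= \rb(M')$, and compatibility of $\beta$ with contexts promotes this to $\rb(L) \rightarrow_= \rb(R)$.

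I do not anticipate a serious obstacle: the argument is essentially a diagram chase in which each clause is dispatched by unfolding Definition~\ref{readback}, and the single honest $\beta$-step is supplied by Proposition~\ref{betarb}. The one delicate point is the bookkeeping of substitutions and $\alpha$-conversion in the two $\lambda$-clauses — particularly if the context $C$ mentions the bound variable $x$ — but since the two sides coincide syntactically after read-back, being careful about the context-filling convention suffices.
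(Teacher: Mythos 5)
Your proposal is correct and follows essentially the same route as the paper: induction on Definition~\ref{rbred}, with the first three clauses yielding syntactic identity of the read-backs, the $\beta$-clause handled by Proposition~\ref{betarb} together with compatibility of $\beta$-reduction with contexts, and the congruence clause by the induction hypothesis. No substantive differences to report.
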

\begin{proof}
We will use induction on Definition~\ref{rbred}.
First, consider the four basic cases:
\begin{enumerate}
\item $L \equiv \langle C\hole,\ \atom M\rangle \rightarrow \atom{C[M]} \equiv R$. \\
Notice $\rb(L) \equiv C[M] \equiv \rb(R)$.

\item $L \equiv \langle C\hole,\ \lambda x.M \rangle \rightarrow \langle C[\lambda x.\hole],\ M[x:=\atom x] \rangle \equiv R$. \\
Notice $\rb(L) \equiv C[\lambda x.\rb(M[x:=\atom x])] \equiv \rb(R)$.

\item $L \equiv \langle C\hole,\ \atom M\ N_0\ \vec N \rangle \rightarrow \langle C\hole,\ \langle M\ \hole,\ N_0 \rangle\ \vec N \rangle \equiv R$. \\
Notice $\rb(L) \equiv C[M\ \rb(N_0) \cdots \rb(N_n)] \equiv \rb(R)$.

\item $L \equiv \langle C\hole,\ (\lambda x.M)\ N_0\ \vec N \rangle \rightarrow \langle C\hole,\ M[x:=N_0]\ \vec N \rangle \equiv R$. \\
Notice $\rb(L) \equiv C[\rb((\lambda x.M)\ N_0)\ \rb(N_1) \cdots \rb(N_n)]$, and \\
\phantom{Notice} $\rb(R) \equiv C[\rb(M[x := N_0])\ \rb(N_1) \cdots \rb(N_n)]$. \\
Then $\rb(L) \rightarrow \rb(R)$ due to Proposition~\ref{betarb}.
\end{enumerate}
Now consider $L \equiv \langle C\hole,\ M\ \vec N \rangle \rightarrow \langle C\hole,\ M'\ \vec N \rangle \equiv R$, where $M \rightarrow M'$, then
notice that $\rb(L) \equiv C[\rb(M)\ \rb(N_1) \cdots \rb(N_n)]$ and $\rb(R) \equiv C[\rb(M')\ \rb(N_1) \cdots \rb(N_n)]$.
Finally, assuming that $\rb(M) \rightarrow_= \rb(M')$, conclude with $\rb(L) \rightarrow_= \rb(R)$.
\end{proof}

\section{Normalization}

\begin{definition}
\label{normalcontext}
A context $C\hole$ is \textit{normal} if and only if $\forall M \in \bnf:\ C[M] \in \bnf$.
\end{definition}

In particular, contexts $\hole$ and $\lambda x.\hole$ are both normal, however $\hole\ M$ is not.
Another example is context $C\hole \equiv x\ \vec N\ \hole$: if $x$ is a variable and $\vec N \in \bnf$, then $C\hole$ is also normal.

\begin{notation}
$[\bnf]$ will denote the set of all normal contexts.
\end{notation}

\begin{proposition}
\label{ctxcomp}
$\forall C_1\hole, C_2\hole \in [\bnf]:\ C_1[C_2\hole] \in \bnf$.
\end{proposition}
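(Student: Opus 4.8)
The plan is to read the conclusion $C_1[C_2\hole] \in \bnf$ as the assertion that the context obtained by placing $C_2\hole$ into the hole of $C_1\hole$ is again normal, i.e.\ lies in $[\bnf]$; its unique hole is the hole inherited from $C_2\hole$. Everything then reduces to unfolding Definition~\ref{normalcontext}, so I would not invoke any of the earlier propositions at all.

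Concretely, I would fix arbitrary $C_1\hole, C_2\hole \in [\bnf]$ and an arbitrary $M \in \bnf$, and unfold the goal to $(C_1[C_2\hole])[M] \in \bnf$. The one structural fact I need is that hole-filling is associative: $(C_1[C_2\hole])[M] \equiv C_1[C_2[M]]$. This is immediate from the meaning of $C[\,\cdot\,]$ in the Notation block — the composite $C_1[C_2\hole]$ is syntactically $C_1\hole$ with $C_2\hole$ substituted for its hole, and filling that composite's (i.e.\ $C_2$'s) hole with $M$ is the same as first filling $C_2\hole$ with $M$ and then placing the result into $C_1\hole$. Given this, the conclusion follows by two applications of Definition~\ref{normalcontext}: from $M \in \bnf$ and $C_2\hole \in [\bnf]$ we get $C_2[M] \in \bnf$, and from $C_2[M] \in \bnf$ and $C_1\hole \in [\bnf]$ we get $C_1[C_2[M]] \in \bnf$, hence $(C_1[C_2\hole])[M] \in \bnf$. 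Since $M$ was arbitrary, $C_1[C_2\hole] \in [\bnf]$.

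I do not expect a genuine obstacle: the proposition is essentially the statement that the defining closure property of $[\bnf]$ is preserved under context composition, and the proof is a direct unfolding of definitions. The only point worth stating with a little care is the associativity of context substitution $(C_1[C_2\hole])[M] \equiv C_1[C_2[M]]$, since contexts and context-in-context substitution are introduced only informally; once that is granted, the argument is a two-line chain of implications.
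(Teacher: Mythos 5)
Your proof is correct and is essentially the paper's own argument: fix $M \in \bnf$ and apply Definition~\ref{normalcontext} twice to get $C_2[M] \in \bnf$ and then $C_1[C_2[M]] \in \bnf$. The only difference is that you make explicit the associativity identity $(C_1[C_2\hole])[M] \equiv C_1[C_2[M]]$ (and the reading of the conclusion as membership in $[\bnf]$), which the paper leaves implicit.
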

\begin{proof}
Let $M \in \bnf$.
Then by Definition~\ref{normalcontext}, $C_2[M] \in \bnf$, and $C_1[C_2[M]] \in \bnf$.
\end{proof}

\begin{proposition}
\label{nfatom}
Let $M \in \lamrb$.
Then $M$ is in normal form if and only if $M$ is an atom.
\end{proposition}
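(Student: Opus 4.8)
The plan is to prove the two directions separately. The ``only if'' direction is immediate: the left-hand side of every clause of Definition~\ref{rbred} is a pair $\langle C\hole,\ \cdots \rangle$, never a bare atom, and since $\atom{\cdot}$ and $\langle\cdot,\cdot\rangle$ are distinct constructors of $\lamrb$, no rule can fire on an atom, so every atom is already in normal form.

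For the ``if'' direction I would establish the contrapositive: every non-atom $M \in \lamrb$ has a redex. Since $M$ is not an atom, its membership in $\lamrb$ must come from the second or third defining clause, so $M \equiv \langle C\hole,\ P \rangle$ with either $P \in \Lambda^\bullet$, or $P \equiv L\ \vec N$ where $L \in \lamrb$, $\vec N \in \Lambda^\bullet$ and $n \ge 0$. I would then argue by structural induction on the derivation of $M \in \lamrb$, splitting on $P$. If $P \equiv \atom Q$, the first rule applies; if $P \equiv \atom Q\ N_0\ \vec N$, the third applies. If $P \in \Lambda^\bullet$ is an abstraction $\lambda x.Q$, the second rule applies; if $P \in \Lambda^\bullet$ is an application, one peels off its arguments to expose its leftmost head, which must be an atom or an abstraction — it cannot be a bare variable, since $\fv(P) = \varnothing$ forbids a free variable in head position — and then either the third rule ($P \equiv \atom x\ N_0\ \vec N$) or the fourth rule ($P \equiv (\lambda x.Q)\ N_0\ \vec N$) applies. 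Finally, if $P \equiv L\ \vec N$ with $L \in \lamrb$ itself a non-atom, then $L$ is a proper subterm of $M$, so the induction hypothesis gives some $L'$ with $L \rightarrow L'$, and the fifth (congruence) clause yields $M \rightarrow \langle C\hole,\ L'\ \vec N \rangle$.

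I do not expect a genuine obstacle here, only bookkeeping: the point requiring attention is that this case split on $P$ is exhaustive, which reduces to the observation that an element of $\Lambda^\bullet$ is either an atom $\atom x$, an abstraction, or an application whose spine head — after removing its arguments — is again an atom or an abstraction; together with the nested-pair case this covers every non-atom. The congruence clause is exactly what makes the nested case close under the induction, and one may take the size of $M$ as a term of $\lamrb$ as the well-founded measure.
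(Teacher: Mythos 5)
Your proof is correct and follows the same idea as the paper, whose entire argument is the one-line observation that the rules of Definition~\ref{rbred} fire exactly on non-atoms. You simply spell out the exhaustiveness of the case split (including the point that a head variable of an application in $\Lambda^\bullet$ cannot be free, and that the congruence clause handles nested pairs), which is exactly the bookkeeping the paper leaves implicit.
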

\begin{proof}
Notice that in Definition~\ref{rbred} we have $M \rightarrow N$ if and only if $M$ is not an atom.
\end{proof}

\begin{proposition}
\label{onlyseq}
For any $M \in \lamrb$, at most one reduction sequence starts from $M$.
\end{proposition}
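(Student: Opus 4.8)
The plan is to show that the one-step relation $\rightarrow$ is \emph{deterministic} on $\lamrb$, i.e.\ that every $M \in \lamrb$ has at most one immediate reduct. The proposition then follows at once: if $\rightarrow$ is a partial function, a reduction sequence $M \equiv L_0 \rightarrow L_1 \rightarrow \cdots$ is forced step by step, and a straightforward induction on $i$ shows that any two reduction sequences starting from $M$ agree as far as both are defined; hence at most one such sequence exists.

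I would prove determinism by structural induction on $M \in \lamrb$. By Proposition~\ref{nfatom} an atom has no reduct, so the only case is $M \equiv \langle C\hole,\ B\rangle$, and the argument is a case analysis on the syntactic shape of the body $B$. Membership in $\lamrb$ forces $B$ to be either an element of $\Lambda^\bullet$ or of the form $H\ \vec N$ with $H \in \lamrb$ and $\vec N \in \Lambda^\bullet$ ($\vec N$ possibly empty); moreover $\lamrb$ contains no applications and no bare abstractions, while $\Lambda^\bullet$ contains no $\langle\cdot,\cdot\rangle$-terms and no free variables. From these facts, $B$ is a bare atom, a bare abstraction belonging to $\Lambda^\bullet$, a bare term $\langle D\hole,\ P\rangle$, or an application whose head (its leftmost non-application subterm) is one of those three shapes; and when $B$ is an application the head/spine decomposition is unique, since $\lamrb$ has no applications. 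Running through these possibilities, one checks that exactly one clause of Definition~\ref{rbred} is applicable in each: when the head is an atom the congruence clause is excluded because atoms are irreducible; when the head is an abstraction, $B$ cannot have entered $\lamrb$ through the clause for applied terms (those have a $\lamrb$-head, never an abstraction), so $B \in \Lambda^\bullet$ and the side conditions of the $\beta$-clause are met. In all but one case the reduct is read off directly; the exception is $B \equiv \langle D\hole,\ P\rangle\ \vec N$, where only the congruence clause applies, producing $\langle C\hole,\ M'\ \vec N\rangle$ with $\langle D\hole,\ P\rangle \rightarrow M'$ — and here the induction hypothesis applied to the structurally smaller term $\langle D\hole,\ P\rangle$ determines $M'$, hence the reduct of $M$.

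I expect the only real work to be this bookkeeping: making precise that the left-hand sides of the five clauses of Definition~\ref{rbred} match a term $\langle C\hole,\ B\rangle$ in mutually exclusive ways. There should be no genuine difficulty once the relevant structural facts are spelled out — that $\lamrb$ has no applications or bare abstractions, that $\Lambda^\bullet$ has no $\langle\cdot,\cdot\rangle$-terms or free variables, and that atoms are irreducible by Proposition~\ref{nfatom} — since each of these is exactly what rules out an unwanted overlap and forces both the head/spine decomposition of $B$ and the resulting reduct of $M$ to be unique.
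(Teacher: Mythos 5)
Your proposal is correct and takes essentially the same approach as the paper: the paper's entire proof is the one-line observation that $\rightarrow$ is deterministic (for any $M \in \lamrb$ there is at most one $N$ with $M \rightarrow N$), which is exactly the fact you establish. Your case analysis on the shape of the body, together with the non-overlap of the five clauses of Definition~\ref{rbred} and the induction hypothesis for the congruence clause, simply supplies the bookkeeping the paper leaves implicit.
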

\begin{proof}
Notice that for any $M \in \lamrb$ there is at most one $N \in \lamrb$ such that $M \rightarrow N$.
\end{proof}

\begin{proposition}
\label{rbnormal}
$\forall C\hole \in [\bnf],\ M, N \in \Lambda:\ \langle C\hole,\ M^\bullet \rangle \rightarrow^* \atom N\ \Rightarrow\ N \in \bnf$.
\end{proposition}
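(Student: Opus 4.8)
Intuitively the rewriting system performs leftmost $\beta$-reduction on $M$ while accumulating an already-normalized outer part inside the context, so the content of the statement is that it never halts before a normal form has been reached. The plan is to make this precise by exhibiting a predicate $\Phi$ on $\lamrb$-terms that (i) holds of $\langle C\hole,\ M^\bullet\rangle$ for $C\hole \in [\bnf]$ and $M \in \Lambda$, (ii) is preserved by $\rightarrow$, and (iii) forces $N \in \bnf$ whenever $\Phi(\atom N)$. Granting such a $\Phi$ the proposition is immediate: by Proposition~\ref{onlyseq} the reduction from $\langle C\hole,\ M^\bullet\rangle$ is unique and by Proposition~\ref{nfatom} it cannot proceed past an atom, so $\langle C\hole,\ M^\bullet\rangle \rightarrow^* \atom N$ means that this unique sequence is finite and terminates at $\atom N$; then $\Phi$ holds all along it by (i) and (ii), in particular at $\atom N$, and (iii) yields $N \in \bnf$.

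The predicate has to be two-layered, reflecting the two roles a context plays in a reachable state. At the top a state is either $\atom N$ with $N \in \bnf$, or $\langle C\hole,\ s\rangle$ with $C\hole \in [\bnf]$ and $s$ either a \emph{plain} $\Lambda^\bullet$-term --- one in which every atom is $\atom x$ for a variable $x$ --- or a spine $L\ \vec N$ with $\vec N$ plain and $L$ satisfying the \emph{inner} predicate. The inner predicate is defined the same way, except that every context $D\hole$ occurring in it must be \emph{neutral-normal}, meaning $D[Q]$ is a neutral normal form (something of the shape $x\ \vec N$ with $\vec N \in \bnf$) for every $Q \in \bnf$, and an inner atom $\atom M$ must have $M$ neutral-normal rather than merely $M \in \bnf$. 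Plainness survives reduction because atoms enter $\Lambda^\bullet$-positions only through the original $M^\bullet$, through the substitution $M[x:=\atom x]$ of the abstraction rule, and through the $\beta$-substitution of the fourth rule, none of which puts a non-variable atom there. The stronger inner demand is exactly what makes the third rule safe: it rewrites $\langle C\hole,\ \atom M\ N_0\ \vec N\rangle$ into $\langle C\hole,\ \langle M\ \hole,\ N_0\rangle\ \vec N\rangle$, and the new head-context $M\ \hole$ is neutral-normal precisely when $M$ is a neutral normal form --- which holds either because $\atom M$ was plain (so $M$ is a variable) or by the inner predicate (when $\atom M$ was a spine head).

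Checking (ii) is then a case analysis on Definition~\ref{rbred}, for which I would first record a few routine facts: composing normal contexts yields a normal context (Proposition~\ref{ctxcomp}) and $C[\lambda x.\hole] \in [\bnf]$ when $C\hole \in [\bnf]$, with the same statements for neutral-normal contexts; $M\ \hole$ is neutral-normal whenever $M$ is a neutral normal form; and $D[N]$ is a neutral normal form whenever $D\hole$ is neutral-normal and $N \in \bnf$. Using these, the first rule sends $\langle C\hole,\ \atom M\rangle$ to $\atom{C[M]}$ with $C[M] \in \bnf$ since $M \in \bnf$ and $C\hole \in [\bnf]$, and at the inner level $D[M]$ is moreover a neutral normal form; the second rule moves $\lambda x.\hole$ into the context (preserving both context classes) and leaves a plain body; the third rule is handled as above; the fourth rule does a $\beta$-step entirely inside the plain body, preserving plainness; and the fifth, congruence, rule is dispatched by induction on $\lamrb$-structure using the inner predicate on the subterm being reduced. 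For (iii) a top-level atom with $\Phi$ is $\atom N$ with $N \in \bnf$ by construction, closing the argument.

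The main obstacle I expect is choosing $\Phi$ correctly rather than verifying it: it must be loose enough to hold of $\langle C\hole,\ M^\bullet\rangle$ for an arbitrary $M$ and to be closed under the congruence rule acting on arbitrarily deep subterms, yet tight enough that the head-context built by the third rule is neutral-normal and that the first rule emits a genuine normal form. The delicate point --- and the reason the ``plain'' clause and the neutral layer are both needed --- is that the normality demanded of $M\ \hole$ is not ordinary normality but neutral-normality; seeing that the $M$ there is always a neutral normal form relies simultaneously on atoms dropped into $\Lambda^\bullet$-positions being variables and on the ``large'' atoms emitted by the first rule inheriting neutrality from the neutral-normal contexts that produce them.
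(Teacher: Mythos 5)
Your argument is correct, and it rests on the same two observations that drive the paper's own proof --- normal contexts compose (Proposition~\ref{ctxcomp}), and the head-contexts $x\ \vec Q\ \hole$ created by the third rule are normal because $x\ \vec Q$ is neutral --- but you package them quite differently. The paper inducts over the (deterministic, by Propositions~\ref{nfatom} and~\ref{onlyseq}) reduction sequence with a case split on the shape of $M$, and in the case $M \equiv x\ N_0\ \vec N$ it splices in a sub-induction on $\langle x\ \hole,\ N^\bullet_0 \rangle \rightarrow^* \atom P$; you instead exhibit a predicate preserved by every single step and read the conclusion off at the final atom. The practical difference is that your invariant explicitly covers intermediate states such as $\langle C\hole,\ \atom P\ N^\bullet_1 \cdots N^\bullet_n \rangle$ with $P$ a non-variable neutral normal form, which arise after the sub-reduction finishes but are not of the form $\langle C\hole,\ M^\bullet \rangle$ that the paper's stated induction hypothesis quantifies over; the paper leaves that strengthening implicit (its remark that $x\ \vec Q\ \hole \in [\bnf]$ gestures at it), whereas your two-layer predicate --- plain $\Lambda^\bullet$ material versus spine heads governed by a neutral-normal context class --- is exactly the generalized induction hypothesis needed to make the argument airtight, including under the congruence rule acting on nested $\lamrb$-subterms. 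What the paper's route buys is brevity and a direct appeal to how the unique reduction sequence decomposes; what yours buys is a self-contained subject-reduction-style statement that needs no such decomposition.
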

\begin{proof}
Let $L \equiv \langle C\hole,\ M^\bullet \rangle \rightarrow \atom N$.
Consider the following four cases:
\begin{enumerate}
\item $M \equiv x$, where $x$ is a variable. \\
Then $L \equiv \langle C\hole,\ \atom x \rangle \rightarrow \atom{C[x]}$ and $C[x] \in \bnf$.

\item $M \equiv \lambda x.M_0$, where $M_0 \in \Lambda$. \\
Then $L \rightarrow \langle C[\lambda x.\hole],\ M^\bullet_0\rangle$. \\
Notice $C[\lambda x.\hole] \in [\bnf]$ due to $\lambda x.\hole \in [\bnf]$ and Proposition~\ref{ctxcomp}.

\item $M \equiv (\lambda x.M_0)\ N_0\ \vec N$, where $M_0, N_0, \vec N \in \Lambda$. \\
Then $L \rightarrow \langle C\hole,\ (M_0[x:=N_0]\ \vec N)^\bullet \rangle$.

\item $M \equiv x\ N_0\ \vec N$, where $x$ is a variable and $N_0, \vec N \in \Lambda$. \\
Then $L \equiv \langle C\hole,\ \atom x\ N^\bullet_0 \cdots N^\bullet_n \rangle \rightarrow \langle C\hole,\ \langle x\ \hole,\ N^\bullet_0 \rangle\ N^\bullet_1 \cdots N^\bullet_n \rangle$. \\
Notice that $\forall \vec Q \in \bnf:\ x\ \vec Q\ \hole \in [\bnf]$. \\
Since $L \rightarrow^* \atom N$, we have $\langle x\ \hole,\ N^\bullet_0 \rangle \rightarrow^* \atom{P}$ for some $P \in \Lambda$. \\
Assume $P \in \bnf$ and use induction to conclude $N \in \bnf$.
\end{enumerate}
Due to Proposition~\ref{nfatom} and Proposition~\ref{onlyseq}, the above four cases are exhaustive.
\end{proof}

\begin{proposition}
\label{normalization}
$\forall M, N \in \Lambda:\ \langle \hole,\ M^\bullet \rangle \rightarrow^* \atom N\ \Leftrightarrow\ M \rightarrow^* N\ \wedge\ N \in \bnf$.
\end{proposition}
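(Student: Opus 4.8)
The statement is an equivalence, and I would prove the two implications separately, the forward one being short. For $(\Rightarrow)$, suppose $\langle\hole,\ M^\bullet\rangle \rightarrow^* \atom N$. Since $\hole \in [\bnf]$, Proposition~\ref{rbnormal} gives $N \in \bnf$ immediately. Applying Proposition~\ref{rbcorrect} to each step of the reduction sequence and concatenating the resulting $\rightarrow_=$ steps yields $\rb(\langle\hole,\ M^\bullet\rangle) \rightarrow^* \rb(\atom N)$; since $\rb(\langle\hole,\ M^\bullet\rangle) \equiv M$ and $\rb(\atom N) \equiv N$, this is precisely $M \rightarrow^* N$.

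The backward direction is the substantial one. Assume $M \rightarrow^* N$ with $N \in \bnf$. Then $M$ has a $\beta$-normal form, which by the Church--Rosser theorem coincides with $N$ up to $\alpha$-equivalence, and by the standard normalization theorem for the leftmost strategy (see \cite{barendregt}) the leftmost reduction of $M$ terminates at $N$; write $\ell(M)$ for its length. I would then prove, by well-founded induction on the lexicographically ordered pair $(\ell(M),\ \mathrm{size}(M))$, the following generalization: for every $C\hole \in [\bnf]$ and every $M$ that has a $\beta$-normal form $N$, one has $\langle C\hole,\ M^\bullet\rangle \rightarrow^* \atom{C[N]}$. Instantiating this with $C\hole \equiv \hole$ then yields the desired implication. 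The case analysis follows the one in the proof of Proposition~\ref{rbnormal}. If $M$ is a variable, a single step of the first reduction rule finishes. If $M \equiv \lambda x.M_0$, then $\langle C\hole,\ M^\bullet\rangle \rightarrow \langle C[\lambda x.\hole],\ M_0^\bullet\rangle$, the context $C[\lambda x.\hole]$ is normal by Proposition~\ref{ctxcomp}, and since $\ell(M_0) = \ell(M)$ while $\mathrm{size}(M_0) < \mathrm{size}(M)$, the induction hypothesis applies; as the normal form of $\lambda x.M_0$ is $\lambda x$ applied to that of $M_0$, the outcome is $\atom{C[N]}$. If $M \equiv (\lambda x.M_0)\ N_0\ \vec N$, then $\langle C\hole,\ M^\bullet\rangle \rightarrow \langle C\hole,\ (M_0[x:=N_0]\ \vec N)^\bullet\rangle$ and $\ell$ strictly decreases, so the induction hypothesis applies directly (the normal form is unchanged). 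Finally, if $M \equiv z\ N_0\ \vec N$ with $z$ a variable, then $\langle C\hole,\ M^\bullet\rangle \rightarrow \langle C\hole,\ \langle z\ \hole,\ N_0^\bullet\rangle\ N_1^\bullet \cdots N_n^\bullet\rangle$, and I would peel off the arguments from the left: the induction hypothesis applied to $\langle z\ \hole,\ N_0^\bullet\rangle$ (the context $z\ \hole$ is normal, $\mathrm{size}(N_0) < \mathrm{size}(M)$, and $\ell(N_0) \le \ell(M)$) gives $\atom{z\ P_0}$ with $P_0$ the normal form of $N_0$; the congruence rule propagates this reduction; the third rule re-exposes $\langle z\ P_0\ \hole,\ N_1^\bullet\rangle$, whose context is again normal since $P_0 \in \bnf$; iterating reaches $\atom{C[z\ P_0 \cdots P_n]}$, which is $\atom{C[N]}$ because the normal form of $z\ N_0\ \vec N$ is $z$ applied to the normal forms of the $N_i$.

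The case analysis depends on two routine verifications: that $(\lambda x.M_0)^\bullet$ and $((\lambda x.M_0)\ N_0\ \vec N)^\bullet$ have the shapes required by the second and fourth reduction rules and reduce as stated, which follows from how $(\cdot)^\bullet$ commutes with application and with substitution, together with the fact that every member of $\Lambda^\bullet$ is closed. I expect the main obstacle to be the choice of termination measure rather than any individual case: in the last case the recursion is on subterms $N_i$ whose leftmost reductions can be exactly as long as that of $M$, so $\ell$ alone does not decrease; pairing it lexicographically with term size fixes this, since each $N_i$ is a proper subterm of $M$. With the measure settled, every case is a direct computation.
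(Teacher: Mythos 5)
Your forward direction is exactly the paper's: Proposition~\ref{rbnormal} (with $C\hole \equiv \hole$) gives $N \in \bnf$, and Proposition~\ref{rbcorrect} applied stepwise gives $M \rightarrow^* N$ after reading back the endpoints. For the backward direction you take a genuinely different route. The paper argues about termination of the \emph{unique} reduction sequence (Proposition~\ref{onlyseq}): it claims that only the $\beta$-rule could cause divergence, observes that the proof of Proposition~\ref{rbnormal} shows the contracted $\beta$-redexes are always leftmost, invokes the normalization theorem to conclude the sequence is finite, hence ends in an atom by Proposition~\ref{nfatom}, and then identifies that atom as $\atom N$ via Propositions~\ref{rbcorrect} and~\ref{rbnormal} together with uniqueness of normal forms. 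You instead prove the strengthened statement $\langle C\hole,\ M^\bullet\rangle \rightarrow^* \atom{C[N]}$ directly, by well-founded induction on the lexicographic pair $(\ell(M), \mathrm{size}(M))$, constructing the reduction sequence case by case; both routes still rest on the leftmost normalization theorem (yours, to make $\ell$ well defined) and reuse the case split of Proposition~\ref{rbnormal}. What your version buys: it makes explicit the termination of the bookkeeping steps, which the paper asserts without argument (``any infinite reduction sequence on $\lamrb$ can only be due to $\beta$-reduction''), and it arrives at $\atom{C[N]}$ by construction, so no final appeal to uniqueness of normal forms is needed to match the atom reached with the given $N$ --- confluence is used only to track normal forms across the head-contraction case and to decompose the normal form of $z\ N_0\cdots N_n$ into $z\ P_0\cdots P_n$. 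Your measure is also the right one: $\ell$ alone does not decrease in the abstraction and head-variable cases, size alone does not decrease under $\beta$-contraction, and the lexicographic combination handles all four cases; the remaining obligations (that $(\cdot)^\bullet$ commutes with application and with substitution, so that the second and fourth reduction rules fire on $(\lambda x.M_0)^\bullet$ and $((\lambda x.M_0)\ N_0\ \vec N)^\bullet$ as claimed) are indeed routine. I find no gap.
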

\begin{proof}
Notice that any infinite reduction sequence on $\lamrb$ can only be due to $\beta$-reduction.
Further, the proof of Proposition~\ref{rbnormal} shows that the leftmost $\beta$-redex is always contracted, thus we can use \cite[Normalization theorem 13.2.2]{barendregt} to conclude ($\Leftarrow$).

Conversely, ($\Rightarrow$) directly follows from Proposition~\ref{rbcorrect} and Proposition~\ref{rbnormal}.
\end{proof}

\section{Conclusion}

This paper introduced a new term rewriting system designed after the embedded read-back mechanism for interaction nets that was presented in our previous work \cite[Section 7]{termgraph}.

Then we have demonstrated the correctness of our term rewriting system and showed its normalization property in Proposition~\ref{normalization}.
As a simple corollary, the conjecture in \cite{mlc} can thus be proven using Proposition~\ref{normalization} and the preceding work~\cite{sinot}.
Proposition~\ref{normalization} can also help investigate the similar conjectures in \cite[Section 7]{termgraph} and \cite[Section 2]{impure}.

In the future, we intend to apply a similar technique to develop the ideas from \cite{impure} which we believe to be a very promising direction, especially taking into account the experimental results obtained from their software implementation\footnote{\url{https://www.npmjs.com/package/@alexo/lambda}}.

\end{document}